\documentclass{article}

\usepackage{arxiv}

\usepackage[utf8]{inputenc} % allow utf-8 input
\usepackage[T1]{fontenc}    % use 8-bit T1 fonts
\usepackage{hyperref}       % hyperlinks
\usepackage{url}            % simple URL typesetting
\usepackage{booktabs}       % professional-quality tables
\usepackage{amsfonts}       % blackboard math symbols
\usepackage{nicefrac}       % compact symbols for 1/2, etc.
\usepackage{microtype}      % microtypography

\usepackage{graphicx}
\usepackage{natbib}
\usepackage{doi}

\usepackage{amsmath,colortbl,hhline,float}
\usepackage{array,dsfont,subcaption,multirow}
\usepackage{bm}
\usepackage{arydshln}
\usepackage[usenames,dvipnames,table]{xcolor}
\usepackage{algorithm}
\usepackage{algpseudocode}
\usepackage{amssymb,verbatim,amsthm}

\newtheorem{theorem}{Theorem}
\newtheorem{lemma}{Lemma}
\newtheorem{proposition}{Proposition}

 % Real numbers
 % State space
 % Range = 1, ..., N
\newcommand{\bx}{\mathbf{x}} % bold x
\newcommand{\by}{\mathbf{y}} % bold y
\newcommand{\bz}{\mathbf{z}} % bold z
\newcommand{\bM}{\mathbf{M}} % bold M
 % bold X
\newcommand{\bI}{\mathbf{I}} % bold I
 % bold L
\newcommand{\bV}{\mathbf{V}} % bold V
\newcommand{\bA}{\mathbf{A}} % bold A
\newcommand{\bU}{\mathbf{U}} % bold U
\newcommand{\bw}{\mathbf{w}} % bold w
\newcommand{\bmu}{\boldsymbol{\mu}} % bold \mu
\newcommand{\bbeta}{\boldsymbol{\beta}}
\newcommand{\balpha}{\boldsymbol{\alpha}}
\newcommand{\btheta}{\boldsymbol{\theta}} % bold \theta
 % bold \Theta
\newcommand{\bvtheta}{\boldsymbol{\vartheta}} % bold \vartheta
\newcommand{\bLambda}{\boldsymbol{\Lambda}} % bold \Lambda
 % bold \rho
\newcommand{\bMtil}{\widetilde{\mathbf{M}}}
\newcommand{\bVtil}{\widetilde{\mathbf{V}}}

\newcommand{\Expect}[1]{\mathbb{E} \left[{#1}\right]}

\newcommand{\Var}[1]{\mbox{Var} \left[{#1}\right]}

\newcommand\bs{\boldsymbol}

\newcommand{\norm}[1]{\left\lVert#1\right\rVert}

\newcommand{\trace}[1]{\mbox{\textsf{trace}}\left({#1}\right)}

\title{SwISS: A Scalable Markov chain Monte Carlo Divide-and-Conquer Strategy}

%\date{September 9, 1985}	% Here you can change the date presented in the paper title
%\date{} 					% Or removing it

\author{Callum Vyner  \\
	Department of Mathematics and Statistics\\
	Lancaster University\\
	Lancaster, LA1 4YF, UK \\
	%% examples of more authors
	\And
	Christopher Nemeth \\
    Department of Mathematics and Statistics\\
	Lancaster University\\
	Lancaster, LA1 4YF, UK \\
	\texttt{c.nemeth@lancaster.ac.uk} \\
	\AND
    Chris Sherlock \\
    Department of Mathematics and Statistics\\
	Lancaster University\\
	Lancaster, LA1 4YF, UK \\
	\texttt{c.sherlock@lancaster.ac.uk} \\
	%% Coauthor \\
	%% Affiliation \\
	%% Address \\
	%% \texttt{email} \\
	%% \And
	%% Coauthor \\
	%% Affiliation \\
	%% Address \\
	%% \texttt{email} \\
	%% \And
	%% Coauthor \\
	%% Affiliation \\
	%% Address \\
	%% \texttt{email} \\
}

% Uncomment to remove the date
%\date{}

% Uncomment to override  the `A preprint' in the header
%\renewcommand{\headeright}{Technical Report}
%\renewcommand{\undertitle}{Technical Report}

%%% Add PDF metadata to help others organize their library
%%% Once the PDF is generated, you can check the metadata with
%%% $ pdfinfo template.pdf
\hypersetup{
pdftitle={SwISS},
%pdfsubject={q-bio.NC, q-bio.QM},
pdfauthor={Callum Vyner, Chistopher Nemeth, Chris Sherlock},
pdfkeywords={Markov chain Monte Carlo; divide-and-conquer; parallel MCMC; big data}
}

\begin{document}
\maketitle

\begin{abstract}
Divide-and-conquer strategies for Monte Carlo algorithms are  an increasingly popular approach to making Bayesian inference scalable to large data sets. In its simplest form, the data are partitioned across multiple computing cores and a separate Markov chain Monte Carlo algorithm on each core targets the associated partial posterior distribution, which we refer to as a \textit{sub-posterior}, that is the posterior given only the data from the segment of the partition associated with that core. Divide-and-conquer techniques reduce computational, memory and disk bottle necks, but make it difficult to recombine the sub-posterior samples. We propose SwISS: \textit{Sub-posteriors with Inflation, Scaling and Shifting}; a new approach for recombining the sub-posterior samples which is simple to apply, scales
to high-dimensional parameter spaces and accurately approximates the original posterior distribution through affine transformations of the sub-posterior samples. We prove that our transformation is asymptotically optimal across a natural set of affine transformations and illustrate the efficacy of SwISS against competing algorithms on synthetic and real-world data sets.
\end{abstract}

% keywords can be removed
\keywords{Markov chain Monte Carlo; divide-and-conquer; parallel MCMC; big data}

\section{Introduction}
\label{sec:MCint}
Markov chain Monte Carlo (MCMC) algorithms are widely used 
within Bayesian modelling to sample from the often intractable
posterior distribution. These techniques are widely applicable and only
require point-wise evaluation of the posterior density. One of the
potential drawbacks of MCMC algorithms is their lack of scalability. The computational
cost of MCMC is typically linear in the amount of data and can be prohibitive for large data sets, both in computational cost and storage.

In settings with large data sets, or where the model is computationally expensive, evaluating the posterior at every iteration of the MCMC algorithm may be infeasible. Strategies to overcome this include data subsampling \citep{Welling2011}, \citep{Baker2017, baker2019}, \citep{nemeth2021stochastic}, where only a subset of the data is used at each MCMC iteration, or delayed acceptance \citep{sherlock2017adaptive,Quiroz2015a}, where the Metropolis--Hastings accept-reject probability is replaced with a cheaper approximation to the true posterior and the full data posterior is evaluated less frequently.

% talk about map-reduce; aws; embarrisingly parallel
In situations where it is possible to easily parallelise computation in a MapReduce framework, or through cloud-computing infrastructure such as Amazon Web Services, then statistical modelling becomes easily scalable to large data sets. However, applying this approach in practice using algorithms such as MCMC, which are designed to work in serial rather than parallel, is challenging. In this paper, we consider the divide-and-conquer strategy to circumvent the computational bottleneck of MCMC, where the data are partitioned into batches, and each batch is stored on a separate computer core. MCMC is then applied independently on each data batch and posterior samples from each computer are combined to form an accurate approximation of the full posterior, \emph{i.e.}, the posterior that would have been obtained using the full data set.

The main challenge with divide-and-conquer approaches for MCMC lies in the merge step. A range of approaches has been considered in the literature such as: the use of weighted averages of the batch samples \citep{scott2016bayes}; kernel density estimation \citep{Neiswanger2014}; Gaussian process approximations \citep{Nemeth2016}; finding the Wasserstein barycenter of different measures \citep{Srivastava2014}, the geometric median of batch samples \citep{Minsker2014}, as well as using a post-MCMC importance sample \citep{LISA2016}, to name a few.

One of the most popular algorithms in the literature is the
consensus Monte Carlo algorithm \citep{scott2016bayes}, which approximates the full
posterior using a weighted average of sub-posterior samples. The consensus
approach is computationally cheap to apply, does not require tuning and scales well to
high-dimensional parameter spaces. It is also analytically exact in the case of
Gaussian sub-posteriors, but can produce poor approximations when the
sub-posteriors are non-Gaussian (see Section \ref{subsec:MCgeom}).

In this paper we propose SwISS, an algorithm that is as fast as the consensus algorithm, is exact in the Gaussian case, does not require tuning, and which scales well to
high-dimensional posterior distributions. However, in the case of non-Gaussian sub-posteriors, it can
produce more accurate posterior approximations than the consensus algorithm.
Unlike the consensus approach, SwISS does not merge samples but instead applies a transformation to the posterior samples that are generated from a stochastic approximation of the full posterior. As in \cite{LISA2016}, we refer to this stochastic approximation as the \textit{inflated sub-posterior}, which is the posterior density, conditional on a subset of the data, raised to a positive power. Inflating the sub-posterior in this manner has the effect of approximately preserving the shape of the posterior density conditional on the full data. Affine transformations (shift and re-scale) are applied to each batch of sub-posterior samples to form an approximate sample from the full posterior. This is a generalisation of the algorithm of \cite{Wu2017} which simply shifts each sub-posterior, with no further correction and hence performs poorly when the sub-posterior variances differ substantially. There are many different affine transformations that produce a sample from the true posterior when the sub-posteriors are Gaussian; we provide theoretical support for our particular choice, showing that, in a natural sense, it is optimal amongst the set of transformations that are exact in the Gaussian case.

The paper is organised as follows: Section \ref{sec:MCbac} provides an introduction to divide-and-conquer MCMC, covering the notation for posterior and sub-posterior densities. In Section \ref{sec:MCmet} we introduce our proposed algorithm, SwISS, and provide supporting theoretical results and pseudo-code for implementation. Section \ref{sec:MCres} covers the numerical performance of SwISS and is compared against other popular divide-and-conquer algorithms from the literature. Finally, Section \ref{sec:conclusions} gives a summary of the contributions from the paper.

\section{Preliminaries}
\label{sec:MCbac}

Let $f(\by|\btheta)$ be the likelihood for a statistical model,
parameterised by $\btheta\in \mathbb{R}^d$, for a data set $\by=\{y_1,y_2,\ldots,y_n\}$ of length $n$.
Let $\pi_0(\btheta)$ denote the prior density for the parameter vector $\btheta$, then our posterior density is, up to a constant of proportionality,
\begin{equation}
  \label{eq:full_post}
  \pi(\btheta|\by) \propto \pi_0(\btheta)f(\by|\btheta).
\end{equation}

We assume that $\by$
can be partitioned into $B$
 batches, $\by_1,\dots,\by_B$,
such that the likelihood for the full data is the product of the
likelihoods for the individual batches, \emph{i.e.},
$f(\by|\btheta)=\prod_{b=1}^Bf_b(\by_b|\btheta)$. This is the case,
for example, when the individual data points are independent.
The posterior density for $\btheta$ given $\by$ is, up to a constant of proportionality, 
\begin{equation}
  \label{eq:post}
  \pi(\btheta|\by) \propto \pi_0(\btheta)\prod_{b=1}^B f_b(\by_b|\btheta).
\end{equation}
In the literature, there are generally two approaches to applying MCMC on batches of data. In the first approach, MCMC is applied to target a \textit{sub-posterior} density for each batch $b$, of the form 
\begin{equation}
  \label{eq:subpost}
  \pi_b(\btheta|\by_b) \propto  \pi_0(\btheta)^{\frac{1}{B}} f(\by_b|\btheta),
\end{equation}
where $b=1,\dots,B$, such that $\prod_{b=1}^B\pi_b(\btheta|\by_b)=\pi(\btheta|\by)$ as
defined in \eqref{eq:post}. 

If we assume that there are $J$ sub-posterior samples from each of the $B$ batches, which we define as $(\btheta_b^{(j)}; j \in \{1,\ldots,J\}, b \in \{1,\ldots,B\})$, then the consensus Monte Carlo algorithm \citep{scott2016bayes} gives a simple strategy for approximating the full posterior \eqref{eq:post} through a weighted average of the sub-posterior samples,
\[
\btheta^{(j)} =  \left(\sum_{b=1}^B \bw_b\right)^{-1}\sum_{b=1}^B \bw_b\btheta_b^{(j)},
\]
where the weights are typically chosen to be $\bw_b =\Var{\btheta|\by_b}^{-1}$. If each $\pi_b(\btheta|\by_b)$ is Gaussian, then the consensus algorithm produces exact samples from the full posterior.

% \note{The bit I've commented out is either too vague or wrong. You can
% certainly say that if the prior has tails that are lighter than
% exponential then annealing it can lead to an improper prior which in
% turn can lead to an improper posterior. That's the only obvious drawback.}
%The drawback of this strategy is that annelaing the prior in this way is not appropriate for Bayesian inference as the prior now depends on the data set size. Furthermore, for certain problems the prior plays an important role in constraining the model \note{add example}.

A second approach applies MCMC to each \textit{inflated sub-posterior}, where the target density for batch $b=1,\dots,B$ is
\begin{equation}
  \label{eq:inflate}
    \pi_b^B(\btheta|\by_b) \propto  \pi_0(\btheta)f(\by_b|\btheta)^B.
\end{equation}
This is a stochastic (across partitions of
the data) approximation to the full posterior $\pi(\btheta|\by) \approx \pi_b^B(\btheta|\by_b)$, and hence individual
samples from it, in a sense, are already on the same scale as samples
from the full posterior.  

%In contrast with the annealed prior approach \eqref{eq:subpost}, each
%sample from a sub-posterior is, in a sense, already on a similar scale
%to the full posterior since it is a sample from a stochastic
%approximation to it.
%the right scaleusing the inflated strategy \eqref{eq:inflate} means
%that the sub-posterior samples do not need to be averaged, and samples
%from each sub-posterior can be treated directly as approximate samples
%from the full posterior \eqref{eq:post}. CALLUM: Not all sub-posterior
%approaches use averaging.

If we assume that the data are partitioned equally across batches, then in the limit, as the amount of data in each batch $n_*=n/B$ approaches infinity, the likelihood will typically dominate the prior, so that by the Bernstein von Mises theorem,
the $b^{\mathrm{th}}$ inflated sub-posterior is $\btheta_b\sim \mathcal{N}(\hat{\btheta}_b,I_{O,b}^{-1}(\hat{\btheta}_b))$, where approximately, $\hat{\btheta}_b\sim \mathcal{N}(\btheta_0,B I_E^{-1}(\btheta_0))$ and where $I_E$ and $I_{O,b}$ are the full-data expected information and the observed information from the inflated likelihood for batch $b$, respectively, and $\btheta_0$ is the true parameter value. Hence, the difference between the expectations of the inflated sub-posteriors are $\mathcal{O}(n_*^{-1/2})$ and, since $\lim_{n\rightarrow \infty}|| I_E^{-1}I_{O,b}||=1+\mathcal{O}(n_*^{-1/2})$, the ratio of the variances of the  sub-posteriors is $1+\mathcal{O}(n_*^{-1/2})$. However, in practice, both the location and scale of the inflated sub-posteriors can vary considerably if the partitioned data sets are imbalanced (see examples in Section \ref{sec:MCres}). Our proposed algorithm, SwISS, provides a correction for the discrepancy in the variance and location of the sub-posterior approximations.

\section{SwISS Algorithm}
\label{sec:MCmet}

\begin{figure*}[t!]
\centering
\includegraphics[scale=0.4]{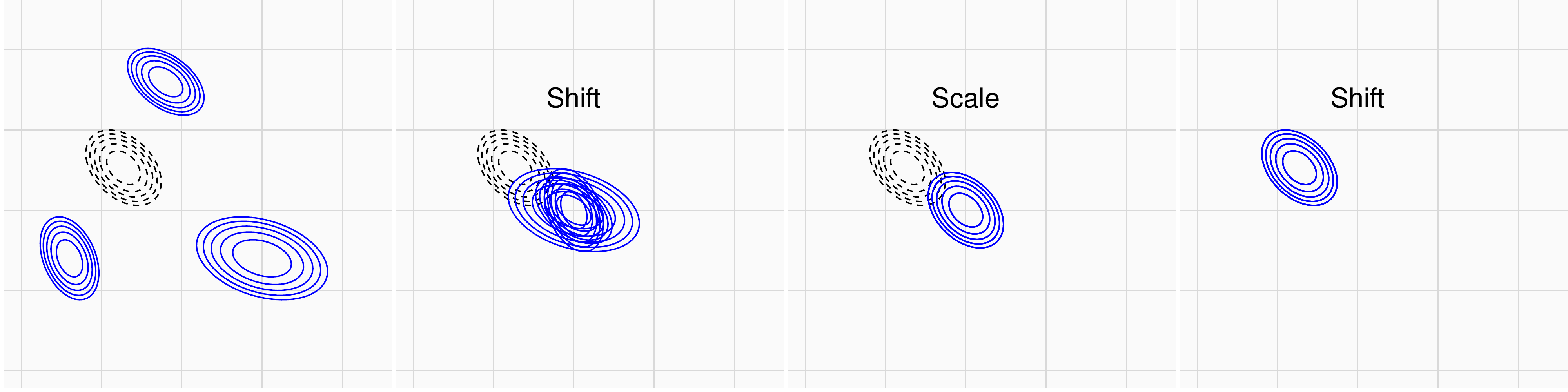}
\caption{A visual representation of the SwISS algorithm applied to a
  two-dimensional Gaussian with three sub-posteriors (blue) approximating the full posterior (black). First, the batch samples are shifted by their respective means $(\btheta_b-\bmu_b)$, then scaled by the matrix $\bA_b$ before finally being shifted by the global mean $\bmu$.}
\label{fig:Grid_Exp}
\end{figure*}

Suppose that we have applied independent Monte Carlo algorithms, such as MCMC, in parallel to
sample from the inflated sub-posteriors \eqref{eq:inflate}, and denote
the $j^{\mathrm{th}}$ (of $J$) sample from the $b^{\mathrm{th}}$ (of $B$) batch by
$\theta_b^{(j)}$. The SwISS algorithm transforms each sample from
the inflated sub-posterior into a sample from an approximation to the full
posterior \eqref{eq:post} using a batch-specific affine
transformation. In the case of Gaussian sub-posteriors, as we show
below, these affine
transformations produce a set of samples from the correct Gaussian
full posterior, and SwISS is exact in this setting. In general, sub-posteriors are
non-Gaussian; however under standard regularity conditions and the Bernstein-von Mises theorem \citep{yang2000asymptotics}, as $n^*=n/B$  approaches infinity the sub-posteriors will be approximately Gaussian  and SwISS can be expected to produce samples
from an approximation to the full posterior.

Firstly, let us suppose that each inflated sub-posterior is Gaussian
with expectation $\bmu_b$ and invertible variance matrix $\bV_b$, so that the full posterior is $\btheta |  \by \sim \mathcal{N}_d(\bmu,\bV)$ where,
\begin{align}
  \label{eq:MCTpar}
  \bV = \left(\frac{1}{B}\sum_{b=1}^B \bV_b^{-1}\right)^{-1} \mbox{and}~ \ \bmu = \bV \frac{1}{B}\sum_{b=1}^B \bV_b^{-1}\bmu_b
\end{align}
are the variance and mean of the full posterior. 

Since it is invertible, $\bV_b$ is a positive-definite
matrix and therefore it has a $d\times d$, invertible square root,
$\bM_b$; \emph{i.e.} $\bV_b = \bM_b\bM_b^\top$. Similarly, the full posterior variance $\bV$ has a square
root, $\bM$, so that $\bV = \bM\bM^\top$. Let samples from the $b^{\mathrm{th}}$
inflated sub-posterior, $\btheta_b^{(1:J)}$, be (marginally) realisations from the random variable
$\btheta_b\sim \pi_b^B$ and define the transformed random variable:
\begin{equation}
  \label{eq:MCbasic}
  \bvtheta_b := \bA_b\left(\btheta_b-\bmu_b\right)+\bmu.
\end{equation}
where $\bA_b$ is any matrix satisfying $\bA_b \bV_b\bA_b^\top=\bV$; for
example, $\bA_b=\bM\bM_b^{-1}$.

Clearly,   $\Expect{\bvtheta_b} = \bmu$ and  $\Var{\bvtheta_b} = \bV$.
%\begin{align*}
%  \Expect{\bvtheta_b} = \bmu, &  & \text{and} & & \Var{\bvtheta_b} = \bV.
%\end{align*}
Furthermore, an affine transformation of a Gaussian random variable is
Gaussian, and hence $\bvtheta\sim \mathcal{N}_d(\bmu,\bV)$. Applying
the same transformation to individual
samples from the $b^{\mathrm{th}}$ batch, therefore provides a sample from the full posterior.
As discussed above, even when the sub-posteriors are not Gaussian, we can still apply the same scaling and shifting to any sub-posterior samples
and produce samples from an approximation to the full posterior.  

\subsection{Choice of Matrix Square Roots}
\label{sec:mat_sqr}

Matrix square roots are not unique; \emph{e.g.} for a diagonal
matrix, each element of the diagonal square root could be
negated; methods for finding a square root of a positive-definite matrix include the Cholesky decomposition, or the simple asymmetric square root arising from the spectral decomposition.
Moreover, for square roots $\bM$ and $\bM_b$,
$\bA_b$ need not be simply $\bM\bM_b^{-1}$, and indeed, this is not always the most sensible choice. 

For now, let $\bM$ be any $d\times d$ square root of $\bV$ and let
\begin{align*}
  \bVtil_b:=\bM^{-1} \bV_b \left(\bM^{-1}\right)^\top \quad \mbox{and} \quad \bA_b=\bM \bMtil^{-1}_b \bM^{-1},
\end{align*}
where $\bMtil_b$ is any $d \times d$ square root of $\bVtil_b$.
%\[
%\bVtil_b:=\bM^{-1} \bV_b \left(\bM^{-1}\right)^\top,
%\]
%let $\bMtil_b$ be any square root of $\bVtil_b$, and let
%\[
%\bA_b=\bM \bMtil^{-1}_b \bM^{-1}.
%\]
Then, $\Var{\bM^{-1}\btheta_b}=\bVtil_b$, so
$\Var{\bMtil^{-1}_b  \bM^{-1}\btheta_b}=\bI$ and hence $\Var{\bA_b \btheta_b}=\bM\bM^\top=\bV$.
%\begin{align*}
%  \Var{\bA_b\theta_b}&=
%  \bM \bMtil^{-1}_b \bM^{-1}~\bV_b~\left(\bM^{-1}\right)^T
%  \left(\bMtil^{-1}_b\right)^\top \bM^\top\\
%  &=
%  \bM \bMtil^{-1}_b \bVtil_b~
%  \left(\bMtil^{-1}_b\right)^\top \bM^\top\\
%    &=
%  \bM \bMtil^{-1}_b ~\bMtil\bMtil^\top~
%  \left(\bMtil^{-1}_b\right)^\top \bM^\top\\
%&=\bM \bM^{\top}=\bV.
%\end{align*}

If $\bV_b=\bV$ for all $b=1,\dots,B$, then $\bVtil_b=\bI$, and provided
$\bMtil_b$ is chosen to be $\bI$, $\bA_b$ becomes the
identity transformation. To be clear, though, if some diagonal
elements of $\bMtil_b$ had been chosen to be $-1$ rather than $1$ then
$\bA_b$ would not be the identity and, unless the initial distribution of points
$\btheta_b^{(1:J)}$ was elliptically symmetric, the transformation in
\eqref{eq:MCbasic} would not then lead to a set of points that
represented the true posterior at all.

Applying the same logic as above,
the transformation $\bMtil_b$ should be the square root of $\bVtil_b$
that moves the
individual points $\btheta_b^{(j)}$ as little as possible. With this in
mind, we define a natural measure of the distance moved by $J$ points,
$\btheta^{(1:J)}$, to which a linear transformation $\bA$ is applied, as:  
\begin{equation}
  \label{eq:MCdis}
  D\left(\bA; \btheta^{(1:J)}\right) := \frac{1}{J} \sum\limits_{j=1}^J \norm{\btheta^{(j)}-\bA\btheta^{(j)}}^2,
\end{equation}
where $\norm{.}^2$ denotes Euclidean distance. We
wish to find the linear transformation $\bMtil_b$ that minimises
$D(\bMtil_b^{-1}; \btheta^{(1:J)})$ subject to the constraint that
$\bMtil_b\bMtil_b^\top=\bVtil_b$. In Section \ref{subsec:MCbest} we
show that, provided the points have expectation zero, as
$J\uparrow \infty$, the best choice of $\bMtil_b$ is the positive-definite, symmetric square root of
$\bVtil_b$; this is the square root used by SwISS. The choice of
square root, $\bM$, of $\bV$ is less important, since within the linear
transformation $\bA_b$, the initial
transformation by $\bM^{-1}$ is later inverted; however, with the
general motivation of preventing excess movement, SwISS sets $\bM$ to be
the positive-definite, symmetric square root of $\bV$. Finally, the averaged re-centring algorithm of \cite{Wu2017} can be viewed as a special case of SwISS where $\bA_b = \bI$.

\subsection{The Positive-Definite, Symmetric Square Root and its Optimality}
\label{subsec:MCbest}
We first define the positive-definite symmetric square root of a
positive-definite matrix and detail the sense in which it is
optimal with respect to the distance measure $D$ \eqref{eq:MCdis}.

Let $\bV$ be a positive-definite matrix and let its  spectral decomposition be
\begin{equation}
  \label{eq:MCspec}
  \bV = \bU \bLambda \bLambda \bU^\top.
\end{equation}
where $\bLambda$ is a diagonal matrix with entries equal to the
positive square roots of the eigenvalues of $\bV$,
 and $\bU$ is a unitary matrix (\emph{i.e.} the columns of $\bU$ are the orthonormal right
eigenvectors of $\bV$, so $\bU \bU^\top=\bI=\bU^{\top}\bU$) and so
\begin{equation}
  \label{eq.symsqr}
\bV^{1/2} := \bM = \bU \bLambda \bU^\top.
\end{equation}
The natural interpretation of $\bM$ is as a simple scaling transformation
with different scalings applied along each of the eigenvectors of $\bV$.

As explained previously, we require a matrix $\bA_b$ such that the
transformation \eqref{eq:MCbasic} leads to a sample with a variance of
$\bV$; however, when inflated sub-posteriors are non-Gaussian, we need a transformation that preserves
the shape and orientation of the inflated sub-posterior as much as possible. 
Theorem \ref{thr:1} shows that for large $J$, $\bM^{-1}$ is not likely to cause more than the minimum discrepancy, given the constraints.

\begin{theorem}
\label{thr:1}
  Let $\btheta^{(j)} \in \mathbb{R}^d \left(j = 1,\dots,J\right)$ be a
  set of independent and identically distributed realisations of a random variable $\btheta$ with
  $\Expect{\btheta} =0$ and $\Var{\btheta}=\bV$. Let $\bV$ have a
  spectral decomposition as in \eqref{eq:MCspec} and let $\bM =
  \bU\bLambda \bU^\top$. Let $\bA$ be any other $d\times d $ matrix such that $\Var{\bA\btheta} = \bI_d$. Then
  \[
    \mathbb{P}\left( \underset{J\rightarrow \infty}{\lim} \left[D\left(\bM^{-1};\btheta^{(1:J)}\right)-D\left(\bA;\btheta^{(1:J)}\right)\right] > 0 \right) =0,
  \]
  where $D(\cdot;\cdot)$ is as defined in \eqref{eq:MCdis}.
\end{theorem}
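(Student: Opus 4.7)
The plan is to reduce the random expression inside the probability to a deterministic inequality via the strong law of large numbers (SLLN), and then to recognise the remaining deterministic problem as a classical trace inequality whose minimiser is exactly $\bM^{-1}$.

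First, I would rewrite the distance as
\[
D\bigl(\bA;\btheta^{(1:J)}\bigr)=\frac{1}{J}\sum_{j=1}^J (\btheta^{(j)})^\top(\bI-\bA)^\top(\bI-\bA)\btheta^{(j)}.
\]
Each summand is a fixed quadratic form in an i.i.d.\ vector with finite second moments, so the SLLN gives, almost surely,
\[
D\bigl(\bA;\btheta^{(1:J)}\bigr)\xrightarrow{J\to\infty}\Expect{\btheta^\top(\bI-\bA)^\top(\bI-\bA)\btheta}=\trace{(\bI-\bA)\bV(\bI-\bA)^\top}.
\]
Expanding using $\Expect{\btheta}=0$ and the constraint $\bA\bV\bA^\top=\bI$, and using the same identity for $\bA=\bM^{-1}$ (which trivially satisfies $\bM^{-1}\bV\bM^{-\top}=\bI$ since $\bM$ is symmetric with $\bM^2=\bV$), this limit simplifies to
\[
D_\infty(\bA) := \trace{\bV}+d-2\trace{\bA\bV}.
\]
Applying the SLLN to $\bM^{-1}$ and $\bA$ simultaneously (taking a common almost-sure event), the random limit inside the probability is almost surely equal to the deterministic quantity $D_\infty(\bM^{-1})-D_\infty(\bA)=2\trace{\bA\bV}-2\trace{\bM^{-1}\bV}$.

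Next, I would carry out the deterministic step. Note $\bM^{-1}\bV=\bM^{-1}\bM^2=\bM$, so the target becomes showing $\trace{\bA\bV}\le\trace{\bM}$ for every $\bA$ with $\bA\bV\bA^\top=\bI$. Writing $\bQ:=\bA\bM$, the constraint says $\bQ\bQ^\top=\bA\bV\bA^\top=\bI$, so $\bQ$ is orthogonal, and
\[
\trace{\bA\bV}=\trace{\bA\bM\cdot\bM}=\trace{\bQ\bM}.
\]
Using the spectral decomposition $\bM=\bU\bLambda\bU^\top$ and the cyclic property of the trace,
\[
\trace{\bQ\bM}=\trace{(\bU^\top\bQ\bU)\,\bLambda}=\sum_{i=1}^d(\bU^\top\bQ\bU)_{ii}\,\Lambda_{ii}.
\]
The matrix $\bU^\top\bQ\bU$ is orthogonal, so each diagonal entry lies in $[-1,1]$ (its columns are unit vectors), while $\Lambda_{ii}>0$. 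Hence $\trace{\bQ\bM}\le\sum_i\Lambda_{ii}=\trace{\bM}$, with equality when $\bQ=\bI$, i.e.\ $\bA=\bM^{-1}$.

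Combining the two steps, on an almost-sure event the limit $D(\bM^{-1};\btheta^{(1:J)})-D(\bA;\btheta^{(1:J)})$ exists and equals $2(\trace{\bA\bV}-\trace{\bM})\le 0$, which yields the claim. No step seems genuinely hard, but the place where some care is needed is verifying the trace inequality for arbitrary (not necessarily symmetric) $\bA$ by routing through the orthogonal matrix $\bQ=\bA\bM$; this change of variables is what turns the constrained optimisation problem over $\bA$ into the clean bound $|(\bU^\top\bQ\bU)_{ii}|\le 1$.
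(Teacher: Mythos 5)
Your argument is correct and follows the same two-stage skeleton as the paper's proof: a strong-law reduction of $D$ to the deterministic limit $d+\trace{\bV}-2\trace{\bA\bV}$, followed by a constrained trace inequality. (The paper reaches the identical limit via its Proposition 1 after the change of variables $Z=\bA\btheta$, $B=\bA^{-1}$; note that $\trace{\bA\bV}=\trace{\bA^{-1}}$ because $\bA\bV\bA^\top=\bI_d$ gives $\bA\bV=(\bA^{-1})^\top$, so the two limiting expressions agree.) Where you genuinely diverge is in the proof of the trace inequality. The paper's Lemma 1 maximises $\trace{B}$ over all square roots $B$ of $\bV$ (i.e.\ $BB^\top=\bV$) by applying Cauchy--Schwarz to the vectors $f_i=B^\top e_i$ in the eigenbasis of $\bV$, using $\lVert f_i\rVert=\lambda_i$. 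You instead set $\mathbf{Q}=\bA\bM$, observe that the constraint forces $\mathbf{Q}\mathbf{Q}^\top=\bI_d$, and bound $\trace{\mathbf{Q}\bM}=\trace{(\bU^\top\mathbf{Q}\bU)\bLambda}$ by $\trace{\bLambda}$ using the elementary fact that the diagonal entries of an orthogonal matrix lie in $[-1,1]$. The two arguments prove the same bound, but yours makes the equality case immediate ($\mathbf{Q}=\bI$, i.e.\ $\bA=\bM^{-1}$) and sidesteps the inversion $B=\bA^{-1}$ on which the paper's proof rests (invertibility of $\bA$ does follow from $\bA\bV\bA^\top=\bI_d$ with $\bV$ nonsingular, but the paper does not remark on it); the paper's formulation in terms of square roots of $\bV$ has the advantage of connecting directly to the discussion of choosing among matrix square roots in Section~\ref{sec:mat_sqr}.
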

% The proof of Theorem \ref{thr:1} is given in Section \ref{sec.proof} of the Appendix.

Theorem 1 relies upon the following two results.

\begin{proposition}
  Let $Z_i\in\mathbb{R}^d$ $(i=1,\dots,n)$ be an independent and identically distributed sequence of 
  random variables with $\Expect{Z}=0$ and $\Var{Z}=I_d$, and let $B$
  be any $d\times d$ matrix. Then, as
  $n\rightarrow \infty$, $D(B;Z_{1:n})\rightarrow
  d+\trace{BB^T}-2\trace{B}$, 
almost surely.
\end{proposition}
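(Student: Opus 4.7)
The plan is to expand $\norm{Z_j - BZ_j}^2$ in a way that separates the dependence on $B$ from the randomness, and then apply Kolmogorov's strong law of large numbers term-by-term. Writing
\[
\norm{Z_j - BZ_j}^2 = (Z_j - BZ_j)^\top(Z_j - BZ_j) = Z_j^\top Z_j - 2 Z_j^\top B Z_j + Z_j^\top B^\top B Z_j,
\]
where I have used the fact that $Z_j^\top B Z_j$ is a scalar and hence equals its own transpose $Z_j^\top B^\top Z_j$. Dividing by $n$ and summing gives $D(B; Z_{1:n})$ as a sum of three empirical averages.

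Next I would verify the integrability needed for the SLLN. Since $\Var{Z} = I_d$ and $\Expect{Z}=0$, we have $\Expect{\norm{Z}^2} = d < \infty$, $\Expect{\norm{BZ}^2} = \trace{B^\top B} < \infty$, and $\Abs{\Expect{Z^\top B Z}} \leq \Expect{\norm{Z}^2}\cdot \Norm{B}_{\mathrm{op}} < \infty$. The Kolmogorov SLLN applied to each of the three i.i.d.\ sequences $\norm{Z_j}^2$, $Z_j^\top B Z_j$, and $\norm{B Z_j}^2$ then gives almost sure convergence to their respective expectations.

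The final step is to evaluate these three expectations using the standard quadratic-form identity $\Expect{Z^\top A Z} = \trace{A\,\Var{Z}} + \Expect{Z}^\top A \Expect{Z}$, which under $\Expect{Z}=0$ and $\Var{Z} = I_d$ reduces to $\trace{A}$. Taking $A = I_d$, $A = B$, and $A = B^\top B$ respectively yields the limits $d$, $\trace{B}$, and $\trace{B^\top B} = \trace{BB^\top}$. Combining these on the event of full probability where all three limits hold simultaneously gives
\[
D(B; Z_{1:n}) \;\longrightarrow\; d - 2\trace{B} + \trace{BB^\top}
\]
almost surely. There is no serious obstacle here; the only place to be careful is the symmetrisation of the cross term (using that $Z_j^\top B Z_j$ is scalar) and invoking the trace-of-quadratic-form identity so that the computation makes no appeal to symmetry of $B$.
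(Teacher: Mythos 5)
Your proof is correct and follows essentially the same route as the paper's: both rest on the strong law of large numbers applied to $\norm{(I-B)Z}^2$ together with the identity $\Expect{Z^\top A Z}=\trace{A}$ for $\Var{Z}=I_d$; you merely expand the square into three terms and invoke the SLLN term-by-term, whereas the paper applies it once to the single variable $\norm{(B-I)Z}^2$ and expands the trace afterwards. Your explicit integrability check is a small bonus of care that the paper leaves implicit, but the argument is the same.
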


\begin{proof}
  \begin{align*}
  D(B;Z_{1:n})
  &=
  \frac{1}{n}\sum_{i=1}^n||(B-I)Z||^2
  \rightarrow
  \Expect{||(B-I)Z||^2} = \Expect{\sum_{i,j,k=1}^d Z_i (B-I)^T_{i,j}(B-I)_{j,k}Z_k}\\
  &=  \sum_{i,j=1}^d (B-I)^T_{i,j}(B-I)_{j,i} = \trace{(B-I)^T(B-I)}\\
  &= \trace{B^TB}+\trace{I}-2\trace{B},
  \end{align*}
  where the convergence is almost sure.
  But $\trace{B^TB}=\trace{BB^T}$, giving the required result.
\end{proof}

\begin{lemma}
Let $V,\Lambda,$ and $U$ be as defined in Theorem 1, and let and $\lambda_i=\Lambda_{i,i}$ $(i=1,\dots,d)$. Then
\[
\sup_{M:MM^T=V} \trace{M}=\sum_{i=1}^d\lambda_i.
\]
The supremum is achieved when $M=U\Lambda U^T$.
\end{lemma}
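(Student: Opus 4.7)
My plan is to parametrize all admissible $M$ in a convenient way and then reduce the optimization to a simple inequality about diagonal entries of an orthogonal matrix.

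Setting $N := U \bLambda U^\top$, so $N$ is symmetric positive-definite with $NN^\top = N^2 = U\bLambda^2 U^\top = V$, I would first show that every $M$ satisfying $MM^\top = V$ has the form $M = N Q$ for some orthogonal $Q$. The forward direction is immediate: if $Q Q^\top = I$ then $NQ(NQ)^\top = N Q Q^\top N^\top = N N^\top = V$. For the converse, since all $\lambda_i > 0$, $N$ is invertible, so $Q := N^{-1} M$ is well-defined and satisfies $Q Q^\top = N^{-1} M M^\top N^{-\top} = N^{-1} V N^{-1} = I$.

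Next I would rewrite the objective using the cyclic property of the trace:
\[
\trace{M} = \trace{N Q} = \trace{U \bLambda U^\top Q} = \trace{\bLambda\, U^\top Q U} = \trace{\bLambda R},
\]
where $R := U^\top Q U$ is orthogonal (as a product of orthogonal matrices). Then $\trace{\bLambda R} = \sum_{i=1}^d \lambda_i R_{ii}$, and since each column of an orthogonal matrix has Euclidean norm one we have $|R_{ii}| \le 1$ for every $i$. Because $\lambda_i > 0$, this yields
\[
\trace{M} = \sum_{i=1}^d \lambda_i R_{ii} \le \sum_{i=1}^d \lambda_i,
\]
which gives the claimed upper bound on the supremum.

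To see the bound is attained, take $M = U \bLambda U^\top$, corresponding to $Q = I$ and $R = I$: this gives $\trace{M} = \trace{\bLambda} = \sum_i \lambda_i$. Since I am only asked for the supremum and a maximizer, the argument can stop here; I would remark that the main step requiring care is simply the orthogonal-parametrization of square roots, which relies on the strict positive-definiteness of $V$ ensuring $N$ is invertible, and that there is no real obstacle beyond that since the bound $|R_{ii}| \le 1$ is a one-line consequence of orthogonality.
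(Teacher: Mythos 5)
Your proof is correct, but it takes a genuinely different route from the paper's. You first parametrize the entire feasible set: writing $N = U\bLambda U^\top$ for the symmetric positive-definite root, every $M$ with $MM^\top = V$ is $NQ$ for orthogonal $Q$ (a polar-decomposition-style argument, valid because positive-definiteness of $V$ makes $N$ invertible), and then the problem reduces to maximizing $\trace{\bLambda R}$ over the orthogonal group, which follows from the one-line bound $|R_{ii}| \le 1$. The paper instead avoids any parametrization: it works directly with an arbitrary admissible $M$, sets $f_i = M^\top e_i$ where the $e_i$ are the orthonormal eigenvectors, observes $\|f_i\| = \lambda_i$ from the constraint, and applies Cauchy--Schwarz to $\trace{M} = \sum_i f_i^\top e_i \le \sum_i \|f_i\|\,\|e_i\| = \sum_i \lambda_i$. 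The two bounds are close cousins ($|R_{ii}|\le 1$ is itself Cauchy--Schwarz applied to unit vectors), but your version buys a structural description of all square roots of $V$ and makes explicit that this is a trace maximization over the orthogonal group (a Procrustes-type problem), while the paper's version is marginally more self-contained since it never needs to establish the $M = NQ$ characterization. Both arguments use the strict positivity of the $\lambda_i$, which is guaranteed since $V$ is a positive-definite (invertible) variance matrix.
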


\begin{proof}
The rows of $U$ form an orthonormal basis $e_1,\dots,e_d$, with
$e_i^TVe_i=\lambda^2_i$ $(i=1,\dots,d)$. For any matrix $M$ with $MM^T=V$,
\[
\lambda_i^2=e_i^TMM^Te_i=||M^Te_i||^2=||f_i||^2,
\]
where $f_i=M^Te_i$. Next, recall that for any unitary matrix, $U$, and square matrix $M$, $\trace{UMU^T}=\trace{M}$. Thus, using the Cauchy-Schwarz inequality, and  since $U^T$ is also unitary,
\begin{align*}
\trace{M}
&=\trace{U^TMU}
=\sum_{i,j,k=1}^d(e_i)_j(e_i)_kM_{j,k}\\
&=
\sum_{i=1}^d e_i^T M e_i = \sum_{i=1}^d f_i^T e_i
                                           \le \sum_{i=1}^d||f_i||~||e_i|| =\sum_{i=1}^d\lambda_i.
\end{align*}
The final part of the Lemma follows as $\trace{U\Lambda U^T}=\trace{\Lambda}=\sum_{i=1}^d\lambda_i$.
\end{proof}

To prove Theorem 1,
let $Z_i=AX_i$, so $\Expect{Z_i}=0$ and $\Var{Z_i}=I_d$, and let
  $B=A^{-1}$ so $BB^T=V$. Since $D(A;X_{1:n})=D(B;Z_{1:n})$, by
  Proposition 1, and then from Lemma 1, we have almost surely,
  \begin{align*}
  D(A;X_{1:n})-D(M^{-1};X_{1:n}) &= D(B;Z_{1:n})-D(M;Z_{1:n})\\
  &\rightarrow 2\trace{M}-2\trace{B}\ge 0.  
  \qed
  \end{align*}

\begin{algorithm}[ht]
  \caption{SwISS Algorithm; here $\mathsf{SPSQ}(\bV)$ denotes the
    symmetric positive-definite square root of the matrix $\bV$ as
    described through \eqref{eq:MCspec} and \eqref{eq.symsqr}.}
  \label{alg:swiss}
  \begin{algorithmic}
    \Require \{$\btheta_b^{j}\}_{j=1}^{J}$ - $J$ Monte Carlo samples from each of the $B$ inflated posteriors \\
    \State Calculate the mean and variance for each of the inflated posteriors 
    \For{$ b \in \left\{1,\dots, B\right\}$}
    \State $ \bmu_b \leftarrow \mathsf{mean}[\btheta_b^{(1:J)}] \quad \mbox{and} \quad \bV_b \leftarrow \mathsf{var}[\btheta_b^{(1:J)}]$
    \EndFor \\
    \State Set the global mean $\bmu$ and variance $\bV$  and calculate the matrix square root, 
     \State $\bV = \left(\frac{1}{B}\sum_{b=1}^B \bV_b^{-1}\right)^{-1}, \quad  \bmu = \bV \frac{1}{B}\sum_{b=1}^B \bV_b^{-1}\bmu_b, \quad \mbox{and} \quad \bM \leftarrow \mathsf{SPSQ}(\bV)$ \\
%as in \eqref{eq:MCTpar} 
    \State Apply the affine transformation to the inflated posterior samples
    \For{ $b \in \left\{1, \dots, B \right\} $}
%    \State $\tilde{\btheta}_b^{(1:J)} \leftarrow
%    \bM^{-1}\left(\btheta^b-\hat{\bmu}_b\right) $
    \State $\bVtil_b\leftarrow\bM^{-1}\bV_b\bM^{-1}$
    \State $\bMtil_b \leftarrow \mathsf{SPSQ}(\bVtil_b)$
    \State $\bA_b\leftarrow \bM\bMtil_b^{-1}\bM^{-1}$
%    \State $\bz^b \leftarrow \tilde{\bU}\tilde{\bLambda}^{-1}\tilde{\bU}^\top\tilde{\btheta}^b$
    \State \mbox{Set} $\bvtheta_b^{1:J} \leftarrow \bA_b\left(\btheta_b-\bmu_b\right)+\bmu$ 
 %   \State $\leftarrow \bU\bLambda\bU^\top \bz^b + \hat{\bmu}$
    \EndFor \\
    \State Concatenate the transformed samples $\bvtheta_b^{1:J}$ to give a Monte Carlo approximation of the full posterior distribution $\pi(\btheta|\by)$
    \State \Return $\left\{\bvtheta_{1}^{(1:J)}, \dots, \bvtheta_{B}^{(1:J)} \right\}$  
  \end{algorithmic}
\end{algorithm}

The affine transformation \eqref{eq:MCbasic} of SwISS is easy to apply to each batch of inflated sub-posterior samples, making the algorithm as fast and as simple to use as the consensus algorithm, with the guarantee of exactness in the Gaussian case. A visual representation of SwISS is given in Figure~\ref{fig:Grid_Exp} and pseudo-code for implementing the algorithm is given in Algorithm~\ref{alg:swiss}.

\section{Experiments}
\label{sec:MCres}

In this section we test the accuracy of the SwISS algorithm to merge batch posterior samples drawn from a variety of posterior distributions. We consider various complex posterior geometries to highlight the difference between affine transformations of posterior samples (\emph{i.e.} SwISS) and averaging posterior samples (\emph{i.e.} Consensus Monte Carlo). We also investigate the efficiency of alternative merging algorithms on popular statistical models with simulated and real data. We compare the SwISS algorithm against the following popular competing algorithms from the literature:
\begin{itemize}
\itemsep0em
\item \textbf{Consensus Monte Carlo} (Cons) algorithm \citep{scott2016bayes}, as described in Section~\ref{sec:MCbac}.
\item \textbf{Semiparametric density estimation} (SKDE)\footnote{Implemented using the \textit{parallelMCMCcombine} R package}  from \cite{Neiswanger2014}, where sub-posteriors are approximated semi-parametrically as described in \cite{Hjort1995}.
\item \textbf{Average re-centring} (AR) algorithm from \cite{Wu2017}, which is a special case of SwISS where $\bA_b = \bI$.
\item \textbf{Gaussian Barycenter} (GB) algorithm \citep{Srivastava2015}, assuming a Gaussian approximation for each inflated sub-posterior, the barycenter is the geometric center of the inflated sub-posterior distributions.
% \item \textbf{LISA} the Likelihood inflating sampling algorithm \citep{LISA2016}, that uses the inflated sub-posteriors as importance sampling proposals.
 \end{itemize}

We assess the accuracy of the above algorithms to combine batch posterior samples to form an approximation of the full posterior, comparing the merged approximations against the full posterior, which is generated by sampling (in serial) from the posterior conditional on the full data set. Accuracy of estimation of the posterior of the $d$-dimensional parameter, $\btheta$, is assessed with the following discrepancy measures: 

\begin{itemize}
\itemsep0em
\item \textbf{Mahalanobis distance} (Mah):
  \[ D_{\mbox{Mah}} := \sqrt{\left(\bmu_a-\bmu_f\right)^\top\bV_f^{-1}\left(\bmu_a-\bmu_f \right)}, \]
where $\bV_f$ and $\bmu_f$ are the variance and mean estimates of posterior samples taken from the full data posterior using an MCMC algorithm. For a given posterior approximation algorithm, \emph{e.g.} SwISS, $\bmu_a$ denotes the estimated mean.

\item \textbf{Mean absolute skew deviation} (Skew):
$$\eta :=\frac{1}{d}\sum_{i=1}^d  |\hat{\gamma}^{a}_i - \hat{\gamma}^{f}_i|,$$ where 
  $\gamma_i =\mathbb{E}[\{(\btheta_i-\bmu_{i})/{\bV}^{1/2}_{ii}\}^3]$; \emph{i.e.} $\eta$ is the sum over components of the 
  third standardised moments.

\item \textbf{Integrated absolute distance} (IAD): 
  \begin{align*}
    D_{\mbox{IAD}} :=  \frac{1}{2d}\sum_{j=1}^d \int |\hat{\pi}^a_j(\theta_j) - \hat{\pi}_j^f(\theta_j)|\mathrm{d}\theta_j \in [0,1],
  \end{align*}
  the average of the integrated absolute differences between two kernel density estimates of the marginal posteriors for each component, $j$, of $\theta$: $\hat{\pi}_j^f$, obtained from samples from the true posterior, and  $\hat{\pi}^a_j$ using one of the  approximate merging algorithms \citep{chan2021divide}.

\end{itemize}

\subsection{Complex Posterior Geometries }
\label{subsec:MCgeom}

\begin{figure*}[t!]
  \centering
  \includegraphics[scale=0.3]{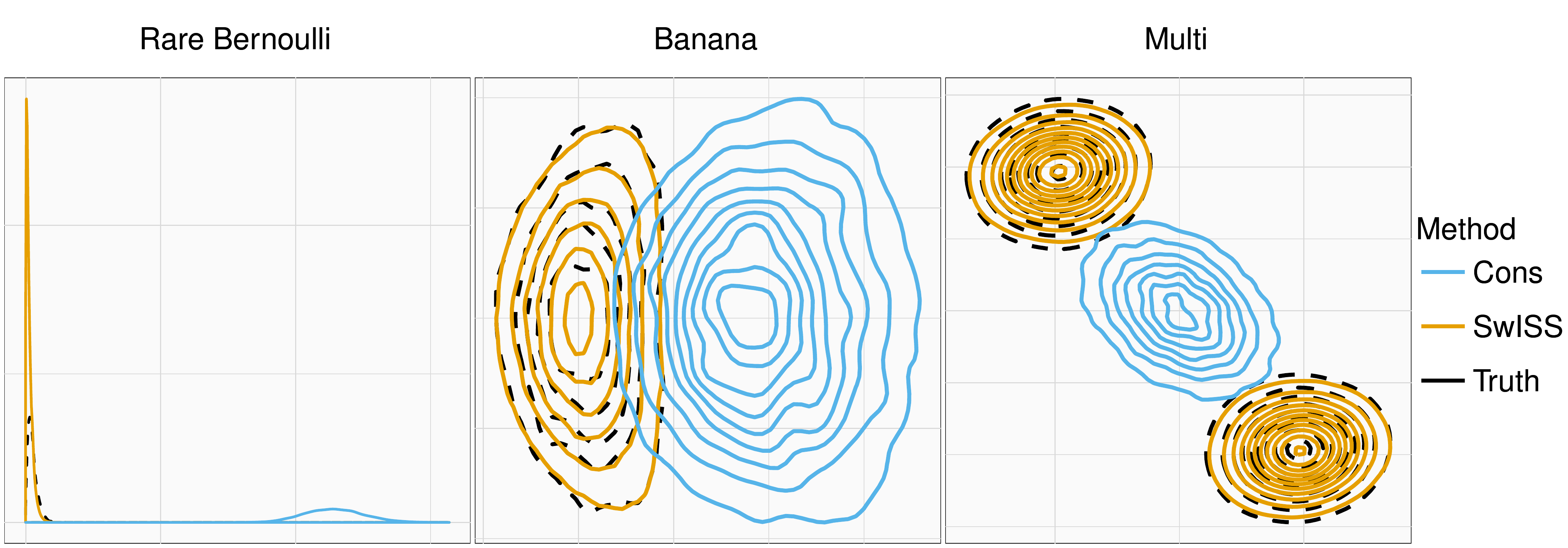}
  \caption{Density plots showing the posterior reconstructions using the SwISS algorithm against the Consensus algorithm on a rare Bernoulli target (left), warped Gaussian (also known as the banana-shaped target) (middle) and bi-modal target (right). In all cases the x-axis is $\theta_1$; for the left plot the y-axis is density, and for the other two plots it is $\theta_2$.}
  \label{fig:MCgeom}
\end{figure*}

One of the main motivations for using MCMC to sample from a posterior distribution, rather than using deterministic approximations (\emph{e.g.} Laplace), is that the posteriors are often non-Gaussian. We consider three artificially generated posterior distributions of dimension one or two (see Figure \ref{fig:MCgeom}) which reflect a range of potential posterior shapes and we compare  SwISS against the consensus Monte Carlo algorithm in these settings. Here $\phi(\mu)$ denotes the probability density function of a standard Gaussian $\mathcal{N}(\mu,1)$, for some $\mu \in \mathbb{R}$

\begin{itemize}
\item \textbf{Rare Bernoulli density}
  \[
    \pi_b(\theta_1 | \by_b) \propto \theta_1(1-\theta_1)^{999},   \quad \mbox{for each} \ b \in \{1,\dots, B\},
  \]
This corresponds to a posterior with 1000 Bernoulli observations with a single positive response and a uniform prior on the success probability, $\theta$, which gives a skewed posterior density. 

\item \textbf{Warped bivariate Gaussian density}
  \[
    \pi_b(\btheta | \by_b ) \propto \phi(\theta_1)\phi(\theta_2+\theta_1^2), \quad \mbox{for each} \ b \in \{1,\dots, B\},
  \]
 where $\btheta=(\theta_1,\theta_2)$. %Here each sub-posterior has a banana-like shape.
\item \textbf{Mixture of bivariate Gaussian densities}
  \[
    \pi_b(\btheta | \by_b ) \propto \phi(\btheta-\mu_1) + \phi(\btheta-\mu_2), \quad \mbox{for each} \ b \in \{1,\dots, B\},
  \]
  where $\btheta=(\mu_1,\mu_2)$. 
\end{itemize}

Both the SwISS and the consensus algorithm are guaranteed to be exact in the case of merging Gaussian posterior samples, but it can be shown that both algorithms still work well for a variety of non-Gaussian posteriors. However, one of the drawbacks of the consensus algorithm is that averaging across batches of sub-posterior samples can remove posterior features such as skewness and multi-modality, as illustrated in Figure~\ref{fig:MCgeom}. 

Figure~\ref{fig:MCgeom} shows posterior density plots for each of the three models, where full MCMC has been utilised to provide a \textit{ground truth} approximation for the full data posterior. The consensus Monte Carlo and SwISS approximations are based on combing samples from $B=10$ sub-posterior and inflated sub-posterior approximations, respectively. The results from these three test cases show that the consensus algorithm struggles to approximate the full data posterior when the target density exhibits non-Gaussian behaviours. The SwISS algorithm, which utilises affine transformations of the inflated sub-posterior samples, rather than averaging, can produce reliable approximations when the posterior is significantly non-Gaussian.

\subsection{Scalability with parameter dimension}
\label{sec:scalability}

Typically, divide-and-conquer methods are advertised for use with tall data, \emph{i.e.} a large number of observations and up to a moderate number of parameters. Here, we test the accuracy and computational speed of the merging algorithms as the number of parameters grows.

Let $\btheta | \by_b  \sim \mathcal{N}_d(\bmu_b,\bs{V}_b)$ for $b \in \left\{1,\dots, B=10\right\}$, where $d$ is the dimension of the parameter space and let $\bmu_b \sim \mathcal{N}_d(0,\mathrm{I}_d)$ and $\bs{V}_b \sim \mathcal{W}^{-1}(5d, \mathrm{I}_d)$. Each sub-posterior is Gaussian, with expectation and variance drawn respectively from Gaussian and inverse-Wishart distributions. For each experiment $J=5,000$ samples were drawn from each sub-posterior and inflated sub-posterior. Using this model, the full data posterior is tractable: 
\[
 \btheta | \by \sim \mathcal{N}_d\left( V \sum\limits_{b=1}^B V_b^{-1}\bmu_b,V \right),
 \]
 where $V^{-1} = \sum_{b=1}^B V_b^{-1}$. The following set of dimensions were used: $d \in \left\{ 5, 10, 20, 40, 80 \right\}$.

 Figure~\ref{fig:timings_scale} shows that both the consensus Monte Carlo algorithm and SwISS perform well with increasing dimension (as measured by integrated absolute distance) and are both computationally efficient. The semi-parametric KDE approach, Gaussian barycenter and average re-centering approaches display reduced accuracy (as measured by integrated absolute distance). Only SwISS and consensus are robust to increasing the dimension of the parameter space. In terms of the computational cost required to merge the posterior samples, all approaches are generally fast, with the exception of the semi-parametric KDE approach. 

 \begin{figure}[t!]
\centering
\includegraphics[scale=0.4]{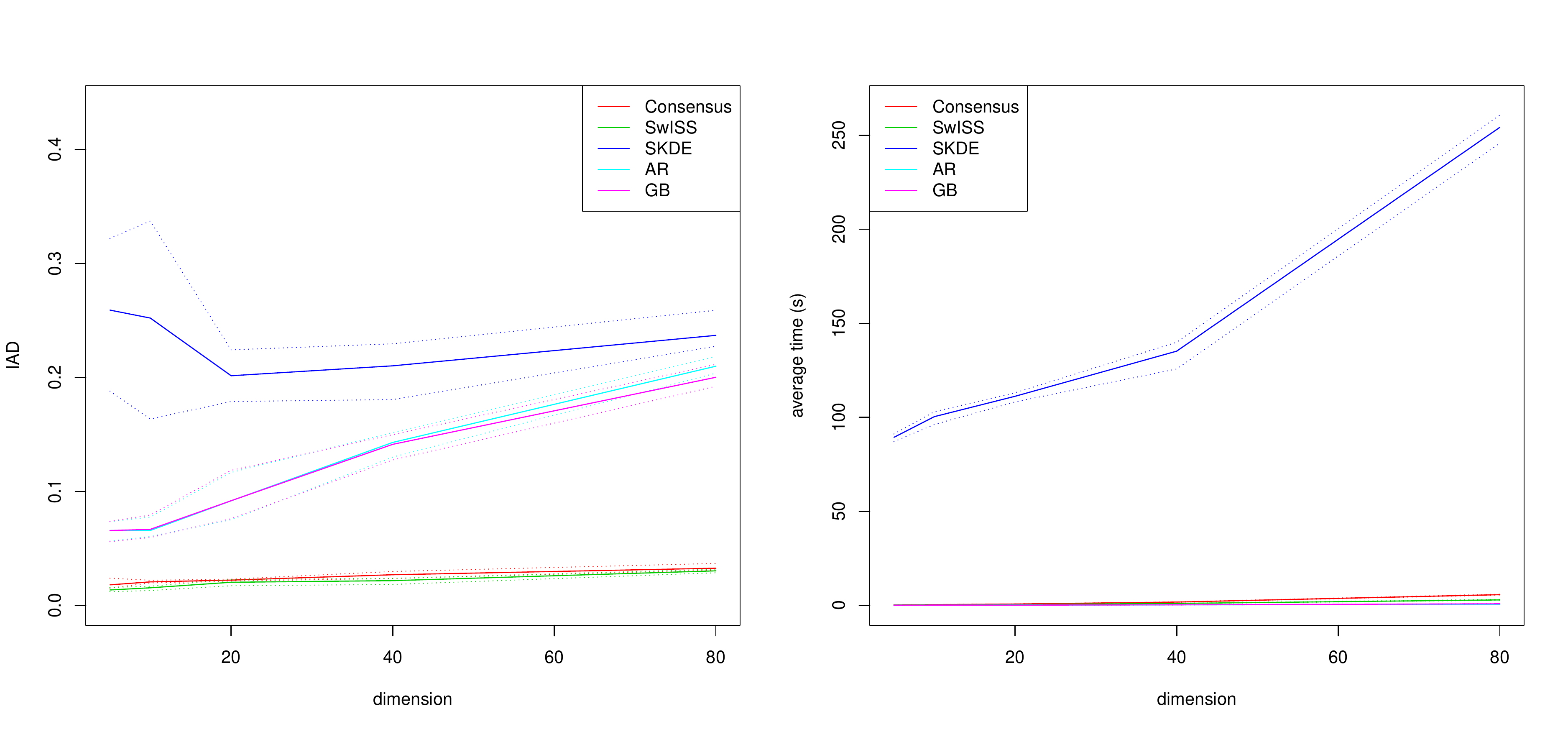}
\caption{The left plot shows integrated absolute distance for each method for a different number of parameters. The right plot shows the mean time each combination method took to obtain the samples from an approximate posterior.}
\label{fig:timings_scale}
\end{figure}

\subsection{Linear Mixed Effects Model}
\label{sec:linear-mixed-effects}

A natural way to extend the simple linear model is to introduce both fixed and random effects. This extension can be particularly useful when data exhibit a hierarchical dependency structure, for example, to cluster student test scores based on classroom. Let  $y_{i,j} \in \mathcal{Y} \subseteq \mathbb{R}$ (for $i, j=1,\ldots,n_j$, and $j=1,\ldots,n$) be the response variable, where $n_j$ is the number of observations for group $j$. The fixed and random effects are $\bx_{i,j} \in \mathcal{X} \subseteq \mathbb{R}^{p}$ and $\bz_j \in \mathcal{Z} \subseteq \mathbb{R}^{r}$, respectively, and are related to the response variable by
$$ y_{i,j} | \bbeta, \balpha_j, 1 \sim \text{Logistic }(\bx_{i,j}\bbeta + \bz_j\balpha_j, 1), \ \ \balpha_i \sim \mathcal{N}_r(0,\Sigma),$$
where $\bbeta \in \mathbb{R}^p$ and $\balpha_i \in \mathbb{R}^r$ are the fixed and random effect model coefficients and the $\mathsf{Logistic}(a,1)$ distribution has a cumulative distribution function of $1/(1+\exp[-(x-a)])$. Our parameters of interest are then $\btheta=(\bbeta,\balpha,\Sigma)$, where $\Sigma$ represents the variance of the random effects. We assume an inverse-Wishart distribution for the prior of $\Sigma$,
 $\Sigma \sim \mathcal{W}^{-1}(\nu,S) $, with $\nu=5$ and $S=5I_r$, and \emph{a priori} we assumed $\bbeta\sim \mathcal{N}_p(0,1000 I_p)$.

%We test the accuracy of the posterior merging algorithms using the MovieLens\footnote{https://grouplens.org/datasets/movielens/} data set on the linear mixed effects model. 
%We simulated a dataset that contains $100 \times 2000$ movie ratings, where users are asked to rate movies with a score between $0.5$ and $5$, with increments of $0.5$. As well as the rating information, each movie is assigned a genre category, with 19 genres in total. Using the ratings as the response variable, we fit the linear mixed effects model to these data using movies and users as feature variables. We used $p=6$ fixed-effects and $r=6$ random-effects and follow the procedure of \cite{Srivastava2015} to reduce the 19 genres to 4 categories. The data are randomly partitioned into $B=10$ batches. As in the previous example, we used the STAN software to sample from the full posterior and sub-posteriors generating $J=5,000$ samples.

We simulated a dataset that contains $200,000$ observations. We set the number of groups $n=2000$ and the number observations for each group $n_j = 100$, for $j=1,\ldots,n$. The number of parameters for the fixed effects were set to $p=10$, with $\beta_{0,i} = (-1)^{(i -1)}$. The number of parameters for each random effect was set to be $r=2$, and we set  
$$\Sigma_0 = \left[\begin{array}{cc}1 & 0.2 \\ 0.2 &1 \end{array} \right], $$ then $\alpha_i$ were simulated independently from a $\mathcal{N}_r(0,\Sigma_0)$ distribution. We included an intercept term, that is $x_{i,j,1}=1$ for all $i,j$, otherwise $\bx_{i,j}$ and $z_{j}$ were simulated from independent $\mathsf{Bernoulli}(0.5)$ distributions. 

The data were randomly partitioned into $B=10$ batches by group, so that each group only belonged to one batch. This was necessary since divide and conquer methods assume independence between the batches. With a Gaussian observation model for the $y_{i,j}$, marginalisation over all of the random effects would be tractable. The logistic observation model necessitates the use of a sampling scheme such as MCMC.

We used the STAN software to sample from the full posterior and sub-posteriors generating $J=5,000$ MCMC samples after an initial 1,000 sample burn in. Table \ref{tab:MCresLME} gives the discrepancy measures for each of the merging algorithms, averaged over 10 random partitions of the data. The results show that all algorithms perform well, with the exception of AR and the Gaussian barycenter, both on the Mahalanobis metric. The SwISS and Consensus algorithms are robust across the range of metrics. 
\begin{table}[t]
  \centering
  \caption{Discrepancy measures for the linear mixed effects model on the simulated dataset. These measures were averaged over 5 random partitions of the data. Estimated standard errors are given in brackets.
  }
  \label{tab:MCresLME}
  \begin{tabular}{l||p{1.8cm}p{1.8cm}p{1.8cm}}
    Algorithm  & Mah & Skew & IAD \\
    \hline 
    SwISS  & 0.69 (0.14)  & \textbf{0.02} ($<$0.01)  & 0.06 (0.01)   \\
    Consensus   & \textbf{0.39} (0.13)  & 0.03 (0.01)   & \textbf{0.04} (0.01)  \\
    Average Re-centring     & 2.20 (0.31)  & \textbf{0.02} ($<$0.01)  & 0.13 (0.01) \\
    Semi-parametric KDE   & \textbf{0.39} (0.08)  & 0.04 ($<$0.01)  & \textbf{0.04} (0.01)  \\
    Gaussian Barycenter & 2.19 (0.31) & 0.05 (0.01) & 0.13 ( 0.01)
  \end{tabular}
\end{table}

\subsection{Logistic Regression Model}
\label{subsec:MClogis}

Logistic regression is a popular technique for modelling binary data, \emph{i.e.} $y_i \in \{0,1\}$. Features $\bx_i \in \mathbb{R}^{d}$, also known as covariates, that can indicate the classification outcome are mapped onto the binary observations using a logit transformation, where the outcome probability $\mathbb{P}(y_i=1) = \exp(\bx_i^\top\btheta)/\left(1+\exp(\bx_i^\top\btheta)\right)$, is the success probability of a Bernoulli random variable. Our parameter of interest $\btheta \in \mathbb{R}^{d}$ is the vector of coefficients.

We consider two data sets, the first is a synthetic data set which is designed to simulate a scenario with rare but highly informative features. This data set is similar to the one given in \cite{scott2016bayes}. We simulate $N=100,000$ data points with $d=5$ binary features with relative frequencies of $\mathbf{x}_i=1$ being $( 1, 0.02,0.03,0.05,0.001)$ and the corresponding true parameter values are $\btheta_0=( -3,1.2,-0.5,0.8,3)$.  Due to the rarely occurring final feature, this can lead to largely differing variances across the sub-posteriors. For our experiments, we spilt the data equally across $ B=25$ batches.

We also consider a real-world data set; the Hepmass data set\footnote{http://archive.ics.uci.edu/ml/datasets/HEPMASS} from high-energy particle physics where the response is an indicator for whether a signal was indicative of an exotic particle being present as opposed to background noise. The data set contains 27 real features which we augmented with an intercept term to give $d=28$ parameters. The full data set contains 10.5 million responses, in this experiment we considered the first $N=100,000$ and split the data across $B=20$ batches. 

In each of the our experiments, the data were repeatedly partitioned $n_{\text{runs}}=5$ times with a Monte Carlo average of the discrepancy metrics given in Table \ref{tab:MCres}. The  STAN \citep{STAN} software, which implements an automatically-tuned version of Hamiltonian Monte Carlo sampling, was used as the MCMC sampler and applied to the full posterior and sub-posteriors for each experiment. Each sampler drew $J=10,000$ samples after a burn in of $1,000$ iterations. 

The results in Table \ref{tab:MCres} show that SwISS and the Consensus algorithm outperform all of the others on the simulated data, whereas for the real example all of the methods work well. The AR algorithm performs especially poorly on the synthetic data example as the variance of the sub-posteriors varies across subposteriors, and the AR algorithm does not correct for this when the sub-posterior samples are merged. The similarity of eprformance on the Hepmass data could be due to the sub-posteriors all being close to Gaussian. We would expect both Consensus and SwISS to work well in this setting as they are exact for Gaussian sub-posteriors, and much faster to apply than nonparametric methods such as SKDE.

\begin{table}[t]
  \centering
  \caption{Discrepancy measures for the logistic regression model with simulated and Hepmass data sets. Each metric was averaged over 5 runs. Estimated standard error are given in brackets.
  }
  \label{tab:MCres}
  \begin{tabular}{l||p{1.7cm}p{1.7cm}p{1.8cm} | p{1.8cm}p{1.8cm}p{1.8cm}}
      & \multicolumn{3}{c}{Simulated data}  & \multicolumn{3}{c}{Hepmass data}  \\
    Algorithm  & Mah & Skew & IAD  & Mah & Skew & IAD  \\
    \hline
     SwISS  & \textbf{0.46} (0.30) & \textbf{0.04} (0.01)   & \textbf{0.05} (0.03) & 0.56 (0.05)& \textbf{0.03} ($<$0.01) & 0.03 ($<$0.01)   \\
         Consensus   & 0.48 (0.35) & 0.05 (0.01)   & 0.06 (0.03)  & \textbf{0.35} (0.05)    & \textbf{0.03} ($<$0.01)     &  0.03 ($<$0.01)   \\
         Average Re-centring     & 5.46 (3.92) & 0.13 (0.07)  & 0.20 (0.03)  & 0.47 (0.04)   & \textbf{0.03}($<$0.01)     & \textbf{0.02} ($<$0.01)   \\
         Semi-parametric KDE   & 1.25 (1.11) & 0.76 (0.33)   & 0.12 (0.06)  & 0.36 (0.05)   & \textbf{0.03}($<$0.01)      &  0.03 ($<$0.01) \\
        Gaussian Barycenter & 5.42 (3.75) & \textbf{0.04} (0.01) & 0.20 (0.01) & 0.46 (0.05) & \textbf{0.03}($<$0.01) & \textbf{0.02} ($<$0.01)

  \end{tabular}
\end{table}

\section{Conclusions}
\label{sec:conclusions}

We have introduced a new method to merge posterior samples generated in parallel on independent batches of data. Our algorithm, SwISS, is fast, scalable to high-dimensional settings, and accurate on a variety of test cases. The SwISS algorithm, like the consensus Monte Carlo algorithm, is simple to apply and competitive against popular alternative divide-and-conquer algorithms. SwISS also has the advantage that it does not require hyper-parameter tuning and is faster to apply than many of the alternative divide-and-conquer algorithms given in the literature. We have provided theoretical support for our choice of affine transformations and shown that SwISS is exact in the case of merging inflated Gaussian sub-posteriors. Code to recreate this work is available through the Github link: \url{https://github.com/CJohnVyner/SwISS}

\subsection{Acknowledgements} 
The authors gratefully acknowledge the support of the UK Engineering and Physical Sciences Research Council grants EP/S00159X/1, EP/V022636/1 and EP/P033075/1.

\bibliographystyle{apalike}
\bibliography{refs}

%\nocite{*}% Show all bib entries - both cited and uncited; comment this line to view only cited bib entries;

\end{document}